\newcommand{\mc}[1]{\mathcal{#1}}
\newcommand{\nix}[1]{}
\newtheorem{theorem}{Theorem}
\newtheorem{lemma}[theorem]{Lemma}
\begin{document}
%
\title{
Equivalence of 2D color codes (without translational symmetry) 
to surface codes
}

\author{\IEEEauthorblockN{Arjun Bhagoji}
\IEEEauthorblockA{Department  of Electrical  Engineering\\
Indian Institute of Technology Madras\\
Chennai  600 036 India \\
Email: arjunbhagoji@gmail.com}
\and
\IEEEauthorblockN{Pradeep  Sarvepalli}
\IEEEauthorblockA{Department  of Electrical  Engineering\\
Indian Institute of Technology Madras\\
Chennai  600 036 India \\
Email: pradeep@ee.iitm.ac.in}
}


%


\maketitle

\begin{abstract}
In a recent work, Bombin, Duclos-Cianci, and Poulin showed that every local translationally invariant 2D  topological stabilizer code is locally equivalent to a finite number of copies of Kitaev's toric code. For 2D color codes, Delfosse relaxed the constraint on translation invariance and mapped a 2D color code onto three surface codes. In this paper, we propose an alternate map based on linear algebra.  We show that any 2D color code can be mapped onto exactly two copies of a related surface code. The surface code in our map is induced by the color code and easily derived from the color code. Furthermore, our map does not require any ancilla qubits for the surface codes. 

\end{abstract}



%
\IEEEpeerreviewmaketitle

\section{Introduction}

Toric codes \cite{kitaev03} proposed by Kitaev are one of the most studied classes of 
topological quantum codes  and of fundamental importance in fault tolerant quantum 
computing. Although toric codes and their generalization--surface codes, have many 
attractive features (such as local stabilizer generators, low complexity decoders, efficient fault tolerant protocols, a high circuit threshold  \cite{dennis2002,poulin10,raussen07}), they have a limited set of transversal gates. 
On other hand, a class of topological color codes can implement the entire Clifford 
group transversally \cite{bombin06}. This  might suggest that color codes are inequivalent  
to toric codes. However, in a very surprising development, Bombin et al. \cite{bombin2012} showed that
translationally invariant 2D color codes can be mapped to a finite number of copies of Kitaev's toric code. 

However, the result in \cite{bombin2012}, as well the subsequent papers \cite{yoshida11,bombin14} which explore the equivalence between stabilizer codes and 
toric codes in great detail, have one important qualifier, namely translational invariance.
For 2D color codes, Delfosse \cite{delfosse2014} relaxed the constraint on translation invariance and mapped a 2D color code to three surface codes. 
In this paper, we propose an alternate map based on linear algebra. 
We 
map arbitrary color codes, including those that are not translationally invariant, onto two copies of a surface code.

The main differences between the proposed map (and its variations) and that of \cite{bombin2012} are: first, the map therein requires local translation symmetry. Second, in the map in \cite{bombin2012}, the number of toric codes onto which the color code is mapped need not always be two. On the other hand, our map always gives exactly two surface codes. These surface codes need not be copies of the toric code on the square lattice. Third, the proposed map does not require any ancilla qubits, as may be the case for some codes under the map in \cite{bombin2012}. Even for arbitrary color codes, our map is efficiently computable locally and we compute the images for all the single qubit errors on the color code in closed form. 
On the other hand, 
the results in \cite{bombin2012} go beyond color codes and include all local translationally invariant 2D stabilizer codes and certain subsystem codes. 

Our work differs from that of \cite{delfosse2014} in the following aspects. 
The map in \cite{delfosse2014} projects  onto three copies of surface codes. 
Therefore, our map leads to  a lower decoding complexity compared to \cite{delfosse2014}.  Furthermore, unlike our map which is a bijective map onto the surface codes, the map in \cite{delfosse2014} is not bijective, although it is injective.  This has a bearing in the context of decoding.  In some cases, a decoder  using the map in \cite{delfosse2014} may not be able to lift an error (estimate) from surface codes to (the parent) color code. This will not occur with a decoder using our map.

Following the submission of our paper, we became aware of the work by Kubica, Yoshida, and Pastawski \cite{kubica15} who showed equivalence between color codes and toric codes for all dimensions $D\geq 2$. 
In 2D, for color codes without boundaries, their result is similar to ours but there are substantial differences.
First, they map the color code onto  two {\em different}  surface codes, we map 
onto two copies of the {\em same} surface code. Second, we use linear algebra to study these equivalences, which is simpler than the approach taken in \cite{kubica15} (or \cite{bombin2012,delfosse2014}).

One immediate application of our results, as in \cite{bombin2012,delfosse2014, kubica15}, is an alternate decoding scheme for color codes via surface codes. 

\section{Preliminaries}

We assume that the reader is familiar with stabilizer codes \cite{calderbank98,gottesman97} and topological codes \cite{kitaev03}. 
The Pauli group on $n$ qubits is denoted $\mc{P}_n$. 
We denote the vertices of a graph $\Gamma$ by $\mathsf{V}(\Gamma)$, and the edges by $\mathsf{E}(\Gamma)$. The set of edges incident on a vertex $v$
is denoted as $\delta(v)$ and the edges in the boundary of a face by $\partial(f)$. Assuming that $\Gamma$ is embedded on a suitable surface we use $\mathsf{F}(\Gamma)$ to denote the faces of the embedding and do not always make explicit reference to the surface.
A surface code on a graph $\Gamma$ is a stabilizer code where the qubits are placed on the edges of $\Gamma$ and whose stabilizer $S$ is given by 
\begin{align}
S &= \langle A_v , B_f \mid v\in \mathsf{V}(\Gamma), f\in \mathsf{F}(\Gamma) \rangle, 
\label{eq:stab-sc}
\end{align}
where $A_v = \prod_{e\in \delta(v) }X_e \mbox{ and  } B_f =\prod_{e\in \partial(f)}Z_e$.
The Pauli group on the qubits of a surface code is denoted as $\mc{P}_{\mathsf{E}(\Gamma)}$. 
A 2-colex is  a trivalent, 3-face-colorable complex. 
A stabilizer code is defined on a 2-colex by attaching qubits to every vertex  
and defining the stabilizer $S$ as 
\begin{eqnarray}
S = \langle B_f^X, B_f^Z \mid v\in \mathsf{F}(\Gamma) \rangle \mbox{ where } B_f^\sigma = \prod_{v\in f }\sigma_v.  \label{eq:stab-tcc}
\end{eqnarray}
We denote the Pauli group on these qubits as $\mc{P}_{\mathsf{V}(\Gamma)}$; the $c$-colored faces of  $\Gamma$ by $\mathsf{F}_c(\Gamma)$ and the $c$-colored edges of $\Gamma$ by $\mathsf{E}_c(\Gamma)$.
We restrict our attention to 2-colexes which do not have boundaries or multiple edges (the surface codes could contain multiple edges though). 
This is not a severe restriction because a 2-colex with multiple edges can be modified to another 2-colex without such edges but encoding the same number of qubits and possessing the same error correcting capabilities (in terms of distance). 
All embeddings are assumed to be 2-cell embeddings 
i.e. faces are homeomorphic to unit discs. 

There are four types of topological charges on a surface code: i) electric charge (denoted $\epsilon$) localized on the vertices, ii) magnetic charge (denoted $\mu$) living on the plaquettes,  iii) the composite electric and magnetic charge denoted $\epsilon \mu$ which resides  on both the plaquettes and vertices, and iv) the vacuum denoted $\iota$. Of these, only two charges are independent. We shall take this pair to be the electric and magnetic charges. 
A charge composed with another charge of the same type gives the vacuum i.e. $c\times c =\iota$. The electric charges are created by $Z$-type errors and magnetic charges by $X$-type errors on the surface code.  

On a color code, the topological charges live on the  faces. In addition to being  electric and/or magnetic, they also carry a color depending on which face they are present. Let us denote the electric charge on a $c$-colored face  as $\epsilon_c$, the magnetic charge as $\mu_c$ and the composite charge as $\epsilon_c\mu_{c}$. The electric charges are not all independent \cite{bombin06}. Any pair (two out of three colors) of them can be taken as the independent set of electric charges. Similarly, only two magnetic charges are independent. As for surface codes, electric (magnetic) charges are created by $Z$ ($X$) errors on the color code.

A hopping operator is any element of the Pauli group that moves the charges. On a surface code, we can move the electric charges from one vertex to another by means of a $Z$-type Pauli operator. We denote by $H_{u\leftrightarrow v}^{\epsilon}$ the operator that moves $\epsilon$ from vertex $u$ to $v$ and vice versa.
If we consider the magnetic charges, then the movement can be accomplished by means of an $X$-type Pauli operator. The operator that moves a magnetic charge from face $f$ to $f'$ (or vice versa) is denoted by $H_{f\leftrightarrow f'}^{\mu}$.
Elementary hopping operators are those which move charges from one vertex to an adjacent vertex or from one plaquette to an adjacent plaquette.  Let $e=(u,v)$ be the edge incident on the vertices $u$, $v$. We denote the elementary hopping operator along $e$ as $H_e^{\epsilon}$, where $ H_e^{\epsilon}  = Z_{e}$. It is a specific realization of $H_{u\leftrightarrow v}^{\epsilon}$. Similarly, the elementary operator that moves $\mu $ across $e$ is denoted as $H_e^\mu$. Let $e$ be the edge shared by the faces  $f$ and $f'$, then $H_{f\leftrightarrow f'}^{\mu}$ can be realized by $H_e^{\mu }$ where $H_e^{\mu }=X_{e}$. Observe that $H_{u\leftrightarrow v}^{\epsilon}$ and $H_{f\leftrightarrow f'}^{\mu}$ anti-commute when they act along the same edge, while operators for the same type of charges commute. In general, $H_{u\leftrightarrow v}^{\epsilon}$ and $H_{f\leftrightarrow f'}^{\mu}$ commute if and only if they cross an even  number of times. 

Similarly, we can define hopping operators for  color codes. Let $f, f' \in \mathsf{F}_c(\Gamma)$ be two  plaquettes connected by an edge $(u,v)$ where $u\in f$
and $v\in f'$. Then $H_{f\leftrightarrow f'}^{\epsilon_c}$ and $H_{f\leftrightarrow f'}^{\mu_c}$ are the operators that move $\epsilon_c$ and $\mu_c$ from $f$ to $f'$. A realization of these operators  along $(u,v)$ is $H_{u,v}^{\epsilon_c} =  Z_{u}Z_{v} $ and  $H_{u,v}^{\mu_c}=X_{u}X_{v}$.
An element of the stabilizer can be viewed as a combination of hopping operators which move a charge around and bring it back to the original location. Since this movement cannot be detected, we can always adjoin an element of the stabilizer to the hopping operators.

\section{Mapping a color code to  two copies of a surface code}

\subsection{Color codes to surface codes---Constraints}\label{ssec:constraints}
Our goal is to find a map  between a color code and some related surface codes. We shall denote this map by $\pi$ for the rest of the paper. 
We shall first describe the construction of $\pi$ in an informal fashion, emphasizing the principles underlying the map, and then rigorously justify all the steps. 
The key observation, due to \cite{bombin2012}, is that there are four types of charges on a surface code and sixteen types of charges on a color code. This is the starting point for relating the color code to surface codes. The two pairs of independent charges on the color code i.e. $\{\epsilon_c, \mu_{c'} \}$ and $\{\epsilon_{c'}, \mu_{c} \}$ suggest that we can decompose the color code into a pair of toric codes by mapping  $\{\epsilon_c, \mu_{c'} \}$ charges onto one toric code and $\{\epsilon_{c'}, \mu_{c} \}$ onto another. 
 However, charge ``conservation'' is not the only constraint. 
We would like a map that preserves in some sense the structure of the color code and allows us to go back and forth between the color code and the surface codes. We shall impose some conditions on this map keeping in mind that we would like to use it in the context of decoding  color codes. 

First, observe that the electric charges on the surface codes live on the vertices while the magnetic charges live on the plaquettes. But, if we consider the pair of charges 
$\{\epsilon_c, \mu_{c'} \}$, they both live on plaquettes---one on the $c$-colored plaquettes and another on $c'$-colored 
plaquettes. A natural way to make the association to a  surface code is to contract all the $c$-colored plaquettes in the embedding of $\Gamma$. This will give rise to a new graph $\tau_c(\Gamma)$. We can now place the charges $\epsilon_c$ and $\mu_{c'}$ on the vertices and plaquettes of $\tau_c(\Gamma)$ respectively. Similarly, the charges 
$\{\mu_{c}, \epsilon_{c'} \}$ can live on the vertices and plaquettes of {\em another} instance of $\tau_c(\Gamma)$.
We impose the following (desirable) constraints on the map $\pi$. It must be 
{(i) linear,
 (ii) invertible,
 (iii) local,
 (iv) efficiently computable,
 (v) preserve the commutation relations between the (Pauli) error operators on $\mathsf{V}(\Gamma)$ i.e. $\mc{P}_{\mathsf{V}(\Gamma)}$,
 and (vi) consistent in the description of the movement of charges on the color code and surface codes.
}These constraints are not necessarily independent and in no particular order. It is possible to relax some of the constraints above. 

\subsection{Deducing the map---A linear algebraic approach }

The maps proposed in \cite{bombin2012} are based on the following ideas: i) conservation of
topological charges ii) identification of the hopping operators and iii) preserving the commutation relations between the hopping operators.
These ideas are central to our work as well. However, we take a simpler linear algebraic approach 
to find the map.

Suppose we have  a 2-colex $\Gamma$. Then, upon contracting all the $c$-colored faces including their boundary edges, we obtain another complex. We denote this operation as $\tau_c$ and the resulting complex as $\tau_c (\Gamma)$ (see Fig.~\ref{fig:tau}). We suppress the subscript if the context makes it clear and just write $\tau$. 
There is a one-to-one correspondence between the $c$-colored faces of $\Gamma$ and the vertices of $\tau(\Gamma)$, so we can label the vertices of $\tau(\Gamma)$ by $f\in \mathsf{F}_c(\Gamma)$. We also label them by $\tau(f)$ to indicate that the vertex was obtained by contracting $f$. Similarly, the edges of $\tau(\Gamma)$ are in one-to-one correspondence with the $c$-colored edges of $\Gamma$, so an edge $\tau(\Gamma)$ is labeled the same as the parent edge $e=(u,v)$ in $\Gamma$.  
The faces which are not in $\mathsf{F}_c(\Gamma)$ are mapped to faces of $\tau(\Gamma)$. Therefore, we label the faces as 
$f$ or more explicitly as $\tau(f)$, where $f\not\in  \mathsf{F}_c(\Gamma) $. Thus, the complex $\tau(\Gamma)$ has the vertex set $\mathsf{F}_c(\Gamma)$,
edge set $\mathsf{E}_c(\Gamma)$ and faces $\mathsf{F}_{c'}(\Gamma)\cup \mathsf{F}_{c''}(\Gamma)$. Since every vertex $v$ in 
$\Gamma$ has a unique $c$-colored edge incident on it, we can associate to it an edge in $\tau(\Gamma)$ as $\tau(v)$. 

\begin{figure}[htb]
\centering
\subfigure{
\centering

\includegraphics[scale=0.5]{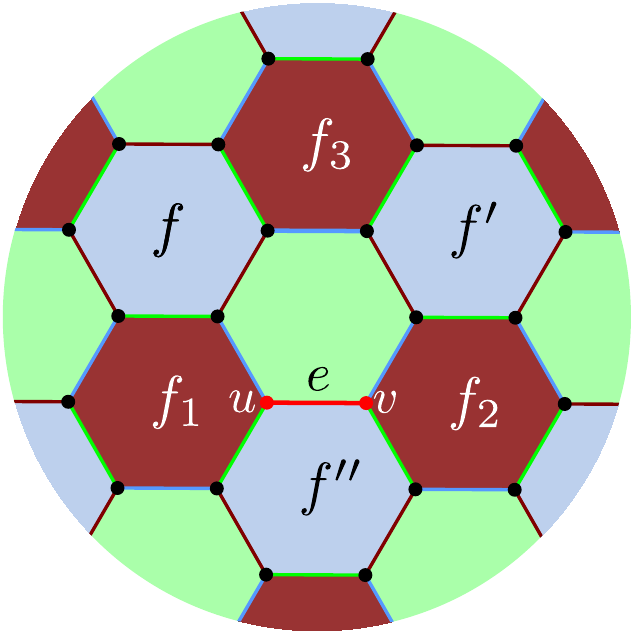}\label{fig:tcc}

}
\subfigure{
\centering
\includegraphics[scale=0.5]{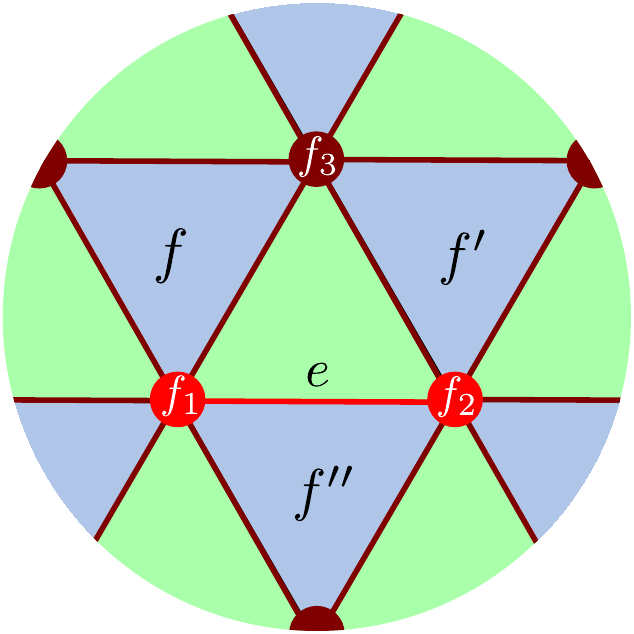}\label{fig:sc}
}
\caption{Illustrating the contraction of a color code via $\tau_c$ and the resultant surface code. Only portions of the  codes are shown. The $c$-colored faces are vertices in $\tau_c(\Gamma)$. The faces $f \not\in \mathsf{F}_c(\Gamma)$ remain faces in $\tau_c(\Gamma)$ and are also labeled $f$ in $\tau_c(\Gamma)$, while the $c$-colored  edge $e=(u,v)$ in $\Gamma$ is mapped to an edge in $\tau_c(\Gamma)$, so we retain the label $e$. Every vertex in $\Gamma$ is incident on a unique $c$-colored edge, so we can also extend $\tau_c$ to vertices $u$, $v$ and edges unambiguously by defining $\tau_c(u)=\tau_c(v)=\tau_c(u,v)=e$.}\label{fig:tau}
\end{figure}

Now, each $c$-colored face in $\Gamma$ can host $\epsilon_c$ and $\mu_c$. With respect to $\tau_c(\Gamma)$, they both reside on the vertices of $\Gamma_c$. So we shall place them on two different copies of $\tau_c(\Gamma$) denoted $\Gamma_1$ and $\Gamma_2$. 
Then, the charges $\epsilon_c$ and $\mu_c$ will play the role of an electric charge on $\Gamma_1$ and $\Gamma_2$, respectively. So, we shall make the identification $\epsilon_c \equiv \epsilon_1$ and $\mu_c\equiv \epsilon_2$. The associated magnetic charges on $\Gamma_i$ will have to reside on $\mathsf{F}(\Gamma_i)$. Possible candidates for these charges must come from $\epsilon_{c'}$, $\epsilon_{c''}$ and  $\mu_{c'}$, $\mu_{c''}$. The following lemma addresses these choices. 
 
\begin{lemma}[Charge mapping] \label{lm:charge-pairing}
Let $c,c',c''$ be three distinct colors.  Then, $\{\epsilon_c,\mu_{c'}\}$ and $\{ \epsilon_{c'},\mu_c \}$ are permissible pairings of the charges so that the color code on $\Gamma$ can be mapped to a pair of surface codes on $\Gamma_i=\tau_c(\Gamma)$. In other words, $\epsilon_1\equiv \epsilon_c$, $\mu_1\equiv \mu_{c'}$, $\epsilon_2\equiv \mu_c$ and $\mu_1\equiv \epsilon_{c'}$, where $\epsilon_i$ and $\mu_i$ are the electric and magnetic charges of the surface code on $\Gamma_i$.
 
\end{lemma}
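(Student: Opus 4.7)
The plan is to verify that each proposed pair — $\{\epsilon_1,\mu_1\}\equiv\{\epsilon_c,\mu_{c'}\}$ on $\Gamma_1$ and $\{\epsilon_2,\mu_2\}\equiv\{\mu_c,\epsilon_{c'}\}$ on $\Gamma_2$ — satisfies the three defining properties of a surface-code charge content on $\Gamma_i=\tau_c(\Gamma)$: correct geometric placement (electric on vertices, magnetic on plaquettes), the correct anti-commutation relations of the elementary hopping operators, and a realization of those hops as $Z$- and $X$-type Pauli operators on the qubits of $\Gamma_i$. These three properties together underwrite the stated identifications.

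I would first dispose of placement. Since $\epsilon_c$ and $\mu_c$ both live on $\mathsf{F}_c(\Gamma)$, which is in bijection with $\mathsf{V}(\Gamma_i)$ by construction of $\tau_c$, each is correctly cast as the electric charge of its copy. The proposed magnetic partners $\mu_{c'}$ and $\epsilon_{c'}$ live on $\mathsf{F}_{c'}(\Gamma)\subseteq \mathsf{F}(\Gamma_i)$, placing them on plaquettes; the remaining $c''$-plaquettes of $\Gamma_i$ host the dependent charges $\mu_{c''}=\mu_c\mu_{c'}$ and $\epsilon_{c''}=\epsilon_c\epsilon_{c'}$ that follow from color-code charge conservation, so the two-copy picture loses no information. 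For commutation, I would exploit the trivalent 3-colored structure: at any vertex of $\Gamma$ exactly one edge of each color meets, so a $c$-edge $(u,v)$ and any incident $c'$-edge $(u,w)$ share only the qubit $u$. Hence $H^{\epsilon_c}_{(u,v)}\,H^{\mu_{c'}}_{(u,w)} = Z_uZ_v\,X_uX_w = -X_uX_w\,Z_uZ_v$, which follows from $\{Z_u,X_u\}=0$ together with the fact that all remaining Pauli factors act on distinct qubits; same-type hops commute trivially. These are precisely the surface-code anti-commutation relations of crossing electric and magnetic hops, and the identical argument works for $\{\mu_c,\epsilon_{c'}\}$ on $\Gamma_2$.

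The main obstacle is the third property: making the hop of the magnetic partner $\mu_{c'}$ into a canonical surface-code hop on $\Gamma_1$, because its color-code hop sits on a $c'$-edge of $\Gamma$ and this edge is \emph{contracted} by $\tau_c$, so it is not an edge of $\Gamma_1$ at all. The key geometric fact I would invoke is that both endpoints of any $c'$-edge of $\Gamma$ lie on a common $c$-face $f^c$ (the $c'$-edge bounds $f^c$ together with a $c''$-face). Under the qubit map $u\mapsto\tau(u)$ that sends a color-code qubit to the unique $c$-edge at $u$, the two endpoints $u,w$ become two $\Gamma_1$-edges incident at the single vertex $\tau(f^c)$. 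Then $X_uX_w$ reads on $\Gamma_1$ as a product of $X$'s on two edges of $\Gamma_1$ meeting at $\tau(f^c)$, a valid two-edge magnetic hop moving a $\mu_1$ charge between the two $c'$-plaquettes of $\Gamma_1$ containing $u$ and $w$ (through the intermediate $c''$-plaquette that shares the $c'$-edge $(u,w)$ with $f^c$). With $X$ and $Z$ swapped, the same picture realizes $\epsilon_{c'}$ hops on $\Gamma_2$ as two-edge electric hops. This completes the check that each pair furnishes a legitimate electric/magnetic charge structure on its copy of $\tau_c(\Gamma)$.
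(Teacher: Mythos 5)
Your proof is correct, but it takes a genuinely different route from the paper's. The paper argues by elimination: it observes that $\epsilon_{c'}$ cannot be the magnetic partner of $\epsilon_c$ because both hops are $Z$-type and always commute, and that $\mu_c$ cannot be either because its hops (along $c$-edges) always overlap $\epsilon_c$-hops in an even number of qubits; this forces $\mu_{c'}$, whose hops do anti-commute with $\epsilon_c$-hops precisely when they cross once. You instead verify directly that the stated pairing furnishes a surface-code charge structure on $\tau_c(\Gamma)$: placement of charges on vertices versus plaquettes, the single-qubit overlap of a $c$-edge with an incident $c'$-edge giving the required anti-commutation, and---going beyond what the paper puts in this lemma---the geometric realization of the $\mu_{c'}$-hop as $X_{\tau(u)}X_{\tau(w)}$ on two edges of $\Gamma_1$ meeting at the vertex $\tau(f^c)$, which the paper defers to Lemma~\ref{lm:hopping}. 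Your version buys an explicit picture of where the hops land after contraction; the paper's version buys the additional information that this pairing is essentially the \emph{only} permissible one, which motivates the whole construction. One small omission on your side: for the two copies to be genuinely independent you should also note that hops assigned to $\Gamma_1$ commute with hops assigned to $\Gamma_2$; your remark that same-Pauli-type hops commute covers $\epsilon_c$ versus $\epsilon_{c'}$ and $\mu_{c'}$ versus $\mu_c$, but the pairs $\epsilon_c$ versus $\mu_c$ and $\epsilon_{c'}$ versus $\mu_{c'}$ (whose elementary hops share either zero or two qubits, hence commute) are exactly the checks the paper's elimination step performs and that you leave implicit.
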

\begin{proof} First, observe that operators that move the electric charges $\epsilon_c$ and $\epsilon_{c'}$ are both $Z$-type,
therefore they will always commute. This means that if $\epsilon_c$ is identified with the electric charge on a surface code,
$\epsilon_{c'}$ cannot be the associated magnetic charge. That leaves either $\mu_{c}$ and $\mu_{c'}$. Of these, observe 
that any operator that moves $\mu_{c}$ will always overlap with any operator that moves $\epsilon_c$ an even number of times. 
Therefore, this leaves only $\mu_{c'}$. The operators that move $\epsilon_c$ and $\mu_{c'}$ commute/anti-commute when they overlap an even/odd number of times just as the electric and magnetic charges of a surface code justifying the association $\epsilon_1\equiv \epsilon_c$ and $\mu_1\equiv \mu_{c'}$. A similar argument shows the validity of the equivalence $\epsilon_2\equiv \mu_c$ and $\mu_2\equiv \epsilon_{c'}$. 
\end{proof}

Let $\Gamma$ have $n$ vertices and $F_c$ vertices of color $c$. Then, $\Gamma_i$ has $F_c$ vertices, $n/2$ edges and $F_{c'}+F_{c''}$ faces. Together $\Gamma_1$ and $\Gamma_2$ have $n$ qubits. We desire that $\pi$ accurately reflect the movement of the independent charges on the color code and the surface codes. So, $\pi$ must map the hopping operators of the charges of the color code on  $\Gamma$ to the hopping operators of the surface code on $\Gamma_i$. As mentioned earlier,
$H_{u,v}^{\epsilon_c}$ moves electric charges on $c$-colored plaquettes and 
$H_{u,v}^{\epsilon_{c'}}$ electric charges on $c'$-colored plaquettes. But, although these charges may appear to be independent, due to the structure of the color code they are not.
A $c''$-colored plaquette on the color code is bounded by edges whose color alternates between $c$ and $c'$. 
The $Z$-type stabilizer associated to this plaquette, i.e. $B_f^Z$, can be viewed as being composed of $H_{u,v}^{\epsilon_c}$ hopping operators that move
$\epsilon_c$, in which case we would expect to map $B_f^Z$ onto $\Gamma_1$. But, $B_f^Z$  can also be viewed as 
being composed of $H_{i,j}^{\epsilon_{c'}}$. Thus, we see that there are two possible combinations of hopping operators that give the same plaquette stabilizer; one composed entirely of hopping operators of $c$-colored charges and the other of hopping operators of $c'$-colored charges. This suggests that there are dependencies among the hopping operators and some of them, while ostensibly acting on only one kind of charge, could still be moving the other type of charges. However, the overall effect on the other charge must be trivial, i.e. it must move the charge back to where it started. A similar argument can be made for $B_f^X$ which moves the magnetic charges. The next lemma makes precise these dependencies. 

\begin{lemma}[Dependent hopping operators]\label{lm:dep-ops}
Let $f \in \mathsf{F}_{c''}(\Gamma)$ and ${1},\ldots, {2{\ell_f}}$ be the vertices in its boundary so that 
$({2i-1},{2i}) \in \mathsf{E}_c(\Gamma)$,  $({2i},{2i+1}) \in \mathsf{E}_{c'}(\Gamma)$ 
for $1\leq i\leq {\ell_f}$ and $2{\ell_f}+1 \equiv 1$. If $\pi$ is invertible, then
 $\pi(B_f^\sigma)\neq I$ and there are $4{\ell_f}-2$ independent 
elementary hopping operators along the edges of $f$.
\end{lemma}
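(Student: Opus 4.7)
The lemma asserts two things, which I would prove separately. First, $\pi(B_f^\sigma)\neq I$ is immediate from invertibility: in the $\mathbb{F}_2$-symplectic representation of Paulis (where $\pi$ is a linear map), $B_f^\sigma=\prod_{v\in f}\sigma_v$ is a nonzero vector, and an invertible linear map cannot send a nonzero vector to zero. The substantive content of the lemma is the count $4\ell_f-2$, which is in fact intrinsic to the color code and does not itself invoke $\pi$.

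For the count, I would first enumerate all elementary hopping operators along $\partial f$. Because $f$ is $c''$-colored, its $2\ell_f$ boundary edges alternate in color between $c$ and $c'$, and each edge carries two elementary hopping operators, giving $4\ell_f$ in total: $H^{\epsilon_c}_{2i-1,2i}=Z_{2i-1}Z_{2i}$ and $H^{\mu_c}_{2i-1,2i}=X_{2i-1}X_{2i}$ on the $c$-edges, and $H^{\epsilon_{c'}}_{2i,2i+1}=Z_{2i}Z_{2i+1}$ and $H^{\mu_{c'}}_{2i,2i+1}=X_{2i}X_{2i+1}$ on the $c'$-edges. Two relations are then immediate, because each vertex of $f$ is incident to exactly one $c$-edge and one $c'$-edge of $f$:
\begin{align*}
\prod_{i=1}^{\ell_f} H^{\epsilon_c}_{2i-1,2i} \;&=\; B_f^Z \;=\; \prod_{i=1}^{\ell_f} H^{\epsilon_{c'}}_{2i,2i+1}, \\
\prod_{i=1}^{\ell_f} H^{\mu_c}_{2i-1,2i} \;&=\; B_f^X \;=\; \prod_{i=1}^{\ell_f} H^{\mu_{c'}}_{2i,2i+1}.
\end{align*}
This upper-bounds the number of independent operators by $4\ell_f-2$.

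The main obstacle is to show that no further dependencies exist. Here I would work in the $\mathbb{F}_2$-symplectic picture, where pure-$X$ and pure-$Z$ operators occupy complementary subspaces, so the $X$- and $Z$-dependencies decouple and it suffices to handle one side. On the $Z$-side, the $2\ell_f$ vectors $Z_jZ_{j+1}$ (with cyclic indices in $\{1,\dots,2\ell_f\}$) are exactly the edge-incidence vectors of the cycle $C_{2\ell_f}$ on the vertex set of $f$; their $\mathbb{F}_2$-span is the even-weight subspace of $\mathbb{F}_2^{2\ell_f}$, which has dimension $2\ell_f-1$, with the unique relation being that all of them sum to zero. The same argument on the $X$-side yields another $2\ell_f-1$ independent vectors, totalling $4\ell_f-2$ as claimed.
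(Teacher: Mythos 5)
Your proof is correct and follows essentially the same route as the paper's: both derive the two relations from writing $B_f^Z$ (resp.\ $B_f^X$) as a product of the $c$-edge hoppers and, alternatively, of the $c'$-edge hoppers, and both reduce $\pi(B_f^\sigma)\neq I$ to the invertibility of $\pi$. The only difference is that you make explicit the independence check that the paper dismisses as ``easily verified by considering their support,'' via the rank of the cycle-incidence vectors, and you correctly note that the count $4\ell_f-2$ is intrinsic to the color code rather than a consequence of properties of $\pi$.
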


\begin{proof}
The stabilizer generator $B_f^Z$ is given as 
\begin{align}
B_f^Z& = \prod_{i=1}^{2{\ell_f}}Z_i  = \prod_{i=1}^{\ell_f} Z_{2i-1} Z_{2i}=Z_1Z_{2{\ell_f}}\prod_{i=1}^{{\ell_f}-1} Z_{2i} Z_{2i+1}\\
&=\prod_{i=1}^{\ell_f} H_{2i-1, 2i}^{\epsilon_c} =H_{1, 2{\ell_f}}^{\epsilon_{c'}}\prod_{i=1}^{{\ell_f}-1} H_{2i, 2i+1}^{\epsilon_{c'}}.
\end{align}
We see that $B_f^Z$ can be expressed as the product of ${\ell_f}$ hopping operators of type $H_{u,v}^{\epsilon_c}$ or type $H_{u,v}^{\epsilon_{c'}}$. Further, we have
\begin{eqnarray}
\pi(B_f^Z)& =&\prod_{i=1}^{\ell_f} \pi(H_{2i-1, 2i}^{\epsilon_c}) =\pi( H_{1, 2{\ell_f}}^{\epsilon_{c'}})\prod_{i=1}^{{\ell_f}-1} \pi(H_{2i, 2i+1}^{\epsilon_{c'}})\nonumber
\end{eqnarray}

If $\pi(B_f^Z)=I$, then $\ker(\pi)\neq I$ which means that $\pi $ is not invertible and it would not be possible to preserve the information about the syndromes, as 
$\pi(B^Z_f)$ would commute with all the error operators. So, we require that $\pi(B_f^Z)\neq I$. 
This means that only one of these hopping operators is dependent and there are $2{\ell_f}-1$ independent hopping operators. The linear independence 
of the remaining $2{\ell_f}-1$ operators can be easily verified by considering their support.
Similarly, $B_f^X$ also implies that there are another $2{\ell_f}-1$ independent hopping operators, giving us $4{\ell_f}-2$ in total. 
\end{proof}

We are now ready to define the action of $\pi$ on  elementary hopping operators. Without loss of generality we can assume if $f\in \mathsf{F}_{c''}(\Gamma)$  has $2{\ell_f}$ edges, then the dependent hopping operators of $f$ are $H_{1,2{\ell_f} }^{\epsilon_{c'}}$  and 
$H_{2m,2m+1 }^{\mu_{c'}}$ i.e. $Z_1Z_{2{\ell_f}}$ and  $X_{2m}X_{2m+1}$, where $1\leq m\leq {\ell_f}$ and $2{\ell_f}+1\equiv 1$. 

\begin{lemma}[Elementary hopping operators]\label{lm:hopping}
Let $f, f' \in \mathsf{F}_c(\Gamma)$ where the edge $(u,v)$ is incident on $f$ and $f'$. 
Then, the following choices reflect the charge movement on $\Gamma$ onto the surface codes on $\Gamma_i$.
\begin{eqnarray}
 \pi(H_{u, v}^{\epsilon_c})  & = & \left[Z_{\tau(u)}\right]_1  = \left[Z_{\tau(v)}\right]_1 \label{eq:elec-hopper1}\\
 \pi(H_{u , v}^{\mu_c}) & = & \left[Z_{\tau(u)}  \right]_2= \left[Z_{\tau(v)}\right]_2,\label{eq:mag-hopper1}
 \end{eqnarray}
 where $[ T  ]_i$  indicates the instance of the surface code on which $T$ acts.
 Now if $f, f' \in \mathsf{F}_{c'}(\Gamma)$ and $(u,v)\in \mathsf{E}_{c'}(\Gamma)$ such that $u\in f$ and $v\in f'$ and
$H_{u,v}^{\epsilon_{c'}}$ and $H_{u,v}^{\mu_{c'}}$ are chosen to be independent hopping operators of $f$, then
\begin{eqnarray}
 \pi(H_{u, v}^{\epsilon_{c'}})  = \left[X_{\tau(u)} X_{\tau(v)}\right]_2  
 \mbox{; } \pi(H_{u , v}^{\mu_{c'}}) = \left[X_{\tau(u)} X_{\tau(v)} \right]_1. \label{eq:mag-hopper2}
\end{eqnarray}
 
\end{lemma}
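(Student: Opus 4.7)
The plan is to verify the four displayed formulas one family at a time by identifying, in each case, the surface-code hopper on $\Gamma_1$ or $\Gamma_2$ that implements the same charge movement as the color-code hopper on $\Gamma$, under the identifications $\epsilon_c\equiv \epsilon_1$, $\mu_c\equiv \epsilon_2$, $\mu_{c'}\equiv \mu_1$, $\epsilon_{c'}\equiv \mu_2$ established in Lemma \ref{lm:charge-pairing}. Since $\tau_c$ already puts the combinatorics of $\Gamma$ into direct correspondence with that of $\Gamma_i$, each verification reduces to a bookkeeping step plus a commutation check.

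For a $c$-colored edge $(u,v)\in \mathsf{E}_c(\Gamma)$: by construction of $\tau_c$, the unique $c$-edge at $u$ coincides with the unique $c$-edge at $v$, namely $(u,v)$ itself, so $\tau(u)=\tau(v)$ as an edge of $\Gamma_i$. This immediately yields the equalities in (\ref{eq:elec-hopper1})--(\ref{eq:mag-hopper1}) and shows the definitions are well-posed. The endpoints of this edge in $\Gamma_i$ are exactly $\tau(f)$ and $\tau(f')$ for the $c$-faces $f,f'$ containing $u,v$, so $[Z_{\tau(u)}]_1$ is the elementary $\epsilon_1$-hopper of the surface code on $\Gamma_1$ that swaps $\epsilon_1$ between those vertices, matching $H_{u,v}^{\epsilon_c}$ under $\epsilon_c\equiv \epsilon_1$; the same argument on the second copy, under $\mu_c\equiv \epsilon_2$, handles $H_{u,v}^{\mu_c}$.

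For a $c'$-colored edge $(u,v)\in \mathsf{E}_{c'}(\Gamma)$, the key geometric observation is that $(u,v)$ bounds a unique $c$-face $f^c$ of $\Gamma$ on one side (and a $c''$-face on the other), so both $u$ and $v$ lie on $\partial f^c$. Consequently $\tau(u)$ and $\tau(v)$ are two \emph{distinct} edges of $\Gamma_i$ that share the common vertex $\tau(f^c)$; moreover, the $c'$-faces $f\ni u$ and $f'\ni v$ persist as faces of $\Gamma_i$ meeting at $\tau(f^c)$, with $\tau(u)\in \partial f$ and $\tau(v)\in \partial f'$. An elementary $X$-type magnetic hopper that moves $\mu_i$ from $f$ to $f'$ through that common corner therefore crosses exactly $\tau(u)$ and $\tau(v)$, i.e.\ equals $X_{\tau(u)}X_{\tau(v)}$; instantiating on $\Gamma_2$ under $\epsilon_{c'}\equiv \mu_2$ and on $\Gamma_1$ under $\mu_{c'}\equiv \mu_1$ gives the two formulas in (\ref{eq:mag-hopper2}).

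The hard part is consistency against the dependencies exposed in Lemma \ref{lm:dep-ops}: the same plaquette stabilizer $B_f^Z$ on a $c''$-face $f$ admits decompositions into $\epsilon_c$-hoppers or into $\epsilon_{c'}$-hoppers, and the proposed formulas must produce the same image in both decompositions even though one $\epsilon_{c'}$-hopper along $\partial f$ is declared dependent. To close this gap I would compute $\pi(B_f^Z)$ via the $c$-decomposition and show it equals $\prod_{e\in \partial f}[Z_e]_1$, which is the plaquette stabilizer of $\tau_c(\Gamma)$ dual to $f$ and is, crucially, non-identity (confirming compatibility with invertibility); the $c'$-decomposition then forces the image of the declared-dependent hopper $H_{1,2\ell_f}^{\epsilon_{c'}}$, and one checks this forced image agrees with the first via the defining plaquette relation. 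The analogous computation with $B_f^X$ handles the magnetic side. Finally, a direct check that sharing-edge anti-commutations on $\Gamma$ (e.g.\ between $H_{u,v}^{\epsilon_c}$ and an overlapping $H^{\mu_{c'}}$) are faithfully reproduced by the anti-commutations of the corresponding Pauli supports on $\Gamma_i$ confirms that the assignment preserves the commutation structure, thereby satisfying the constraints of Section \ref{ssec:constraints}.
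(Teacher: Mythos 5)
Your proposal is correct and takes essentially the same route as the paper: use the charge identifications of Lemma~\ref{lm:charge-pairing}, observe that $\tau_c$ sends $c$-faces to adjacent vertices (so a $c$-edge hopper becomes a single-edge $Z$ electric hopper, with $\tau(u)=\tau(v)$ giving well-posedness) and that the two $c$-edges at the endpoints of a $c'$-edge meet at the contracted vertex (so the $c'$-edge hoppers become weight-two $X$ magnetic hoppers), choosing minimum-weight realizations in each case. Your final paragraph on consistency with the dependent hoppers and on commutation relations goes beyond what the paper proves here --- it defers those checks to the discussion after the lemma and to Lemmas~\ref{lm:commutation} and~\ref{lm:stabilizers} --- so leaving it as a sketch is acceptable.
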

\begin{proof}
We only prove for $H_{u, v}^{\epsilon_c}$ and $H_{u , v}^{\mu_{c'}}$. Similar reasoning can be employed for $H_{u,v}^{\mu_c}$
and  $H_{u, v}^{\epsilon_{c'}}$.
(i)  $H_{u, v}^{\epsilon_c}$: This operator moves $\epsilon_c$ from $f$ to $f'$ in $\Gamma$. These 
faces are mapped to adjacent vertices in $\tau(\Gamma)$. By Lemma~\ref{lm:charge-pairing}, $\epsilon_c$ is mapped to $\epsilon_1$, so $\pi(H_{u, v}^{\epsilon})$ should move $\epsilon_1$ from the vertex $\tau(f)$ to the vertex $\tau(f')$ on $\Gamma_{1}$. 
Many hopping operators can achieve this; choosing the elementary operator gives $\pi(Z_u Z_v)= [Z_{\tau(u,v)}]_1$. Since
$\tau(u,v)=\tau(u)=\tau(v)$, Eq.~\eqref{eq:elec-hopper1} follows.
(ii) $H_{u, v}^{\mu_{c'}}$: This operator moves $\mu_{c'}$ from $f$ to $f'$. Since $\mu_{c'}$ is mapped to $\mu_1$, $ \pi(H_{u, v}^{\mu_{c'}})$ should move $\mu_1$ from the plaquette $\tau(f)$ to $\tau(f')$ on $\Gamma_{1}$. The operator on the first surface code which achieves this is an $X$-type operator on qubits $\tau(u)$ and $\tau(v)$ in $\Gamma_1$, i.e. $[X_{\tau(u)}X_{\tau(v)}]_1$. In both cases we choose the hopping operators to be of minimum weight. 
\end{proof}

\begin{figure}
\centering
\subfigure{
\includegraphics[scale=0.25,angle=-90]{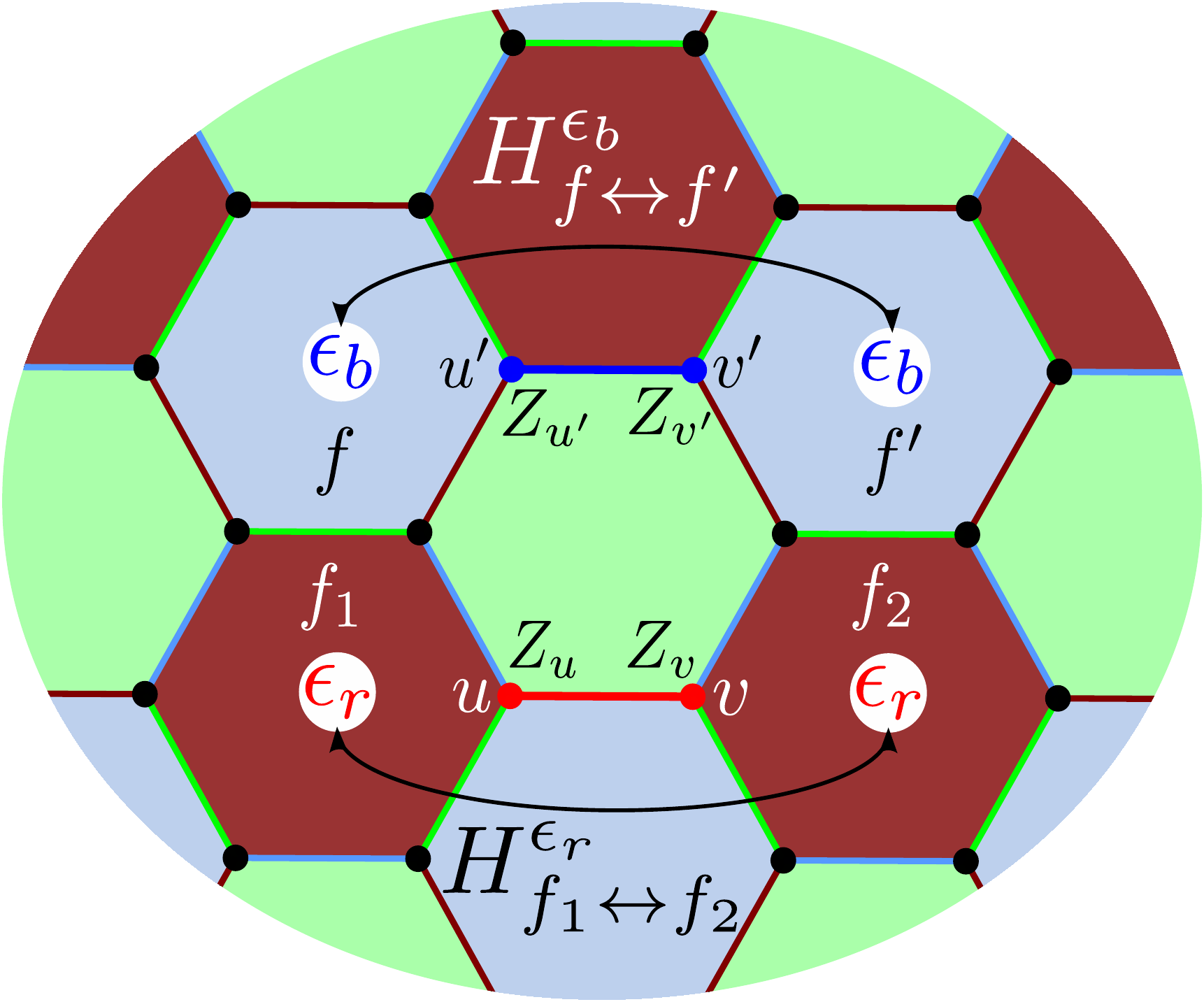}
}
\subfigure{
\includegraphics[scale=0.22
,angle=-90]{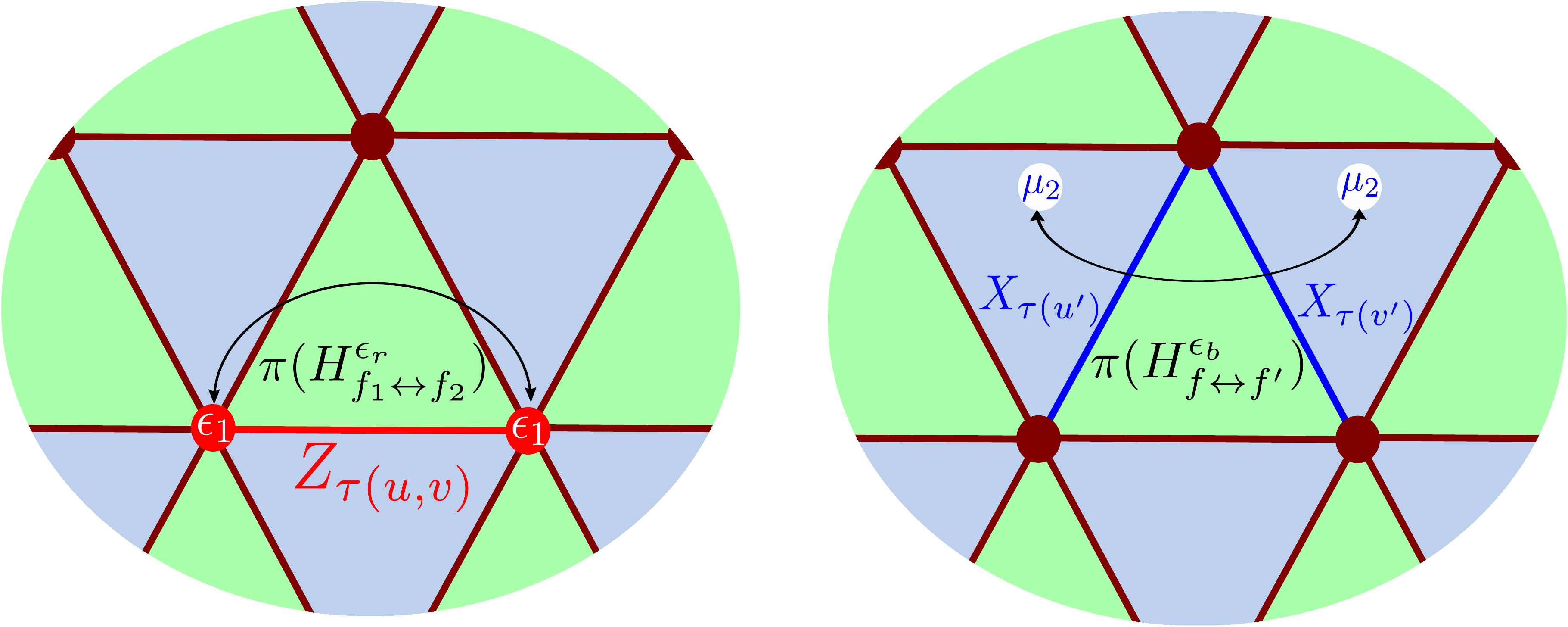}
}
\caption{Mapping the independent hopping operators $H_{u,v}^{\epsilon_r}=H_{f_1 \leftrightarrow f_2}^{\epsilon_r}=Z_u Z_v$ and 
$H_{u',v'}^{\epsilon_b}=H_{f \leftrightarrow f'}^{\epsilon_b}=Z_{u'}Z_{v'}$ on $\Gamma$ 
onto two copies of $\tau(\Gamma)$ i.e. $\Gamma_1$ and $\Gamma_2$; 
$\pi(H_{f_1 \leftrightarrow f_2}^{\epsilon_r}) = [Z_{\tau(u)}]_1 $ acts only on $\Gamma_1$ while 
$H_{f \leftrightarrow f'}^{\epsilon_b} = [X_{\tau(u')}X_{\tau(v')}]_2$ acts only on $\Gamma_2$.}
\end{figure}

Lemma~\ref{lm:hopping} does not specify the mapping for the dependent hopping operators but it can be  obtained as a linear combination of the independent ones. 
Alternative choices to those given in Lemma~\ref{lm:hopping} exist for $\pi$. These choices are essentially alternate hopping operators on the surface codes which accomplish the same charge movement. 
Such operators can be obtained by adding stabilizer elements to those given in 
Eqs.~\eqref{eq:elec-hopper1}--\eqref{eq:mag-hopper2}.

In this paper we explore the choice when the operators $H_{1,2{\ell_f} }^{\epsilon_{c'}}$  and $H_{2m,2m+1 }^{\mu_{c'}}$ are dependent. 
The $c''$-faces form a covering of all the vertices of $\Gamma$ and they are non-overlapping. 
The elementary hopping operators along the edges on such plaquette do not interact with the elementary hopping operators of other plaquettes in $\mathsf{F}_{c''}(\Gamma)$. So we can consider each $f\in \mathsf{F}_{c''}(\Gamma)$ independently. This also makes sense from our constraint to keep $\pi$ local. Based on Lemmas~\ref{lm:dep-ops}~and~\ref{lm:hopping}, we can map the independent elementary hopping operators of $f$ along $c$-colored edges. 
They map elementary hopping operators on $\Gamma$ to elementary  hopping operators on $\Gamma_i$.
\begin{align}
\pi (Z_{2i-1}Z_{2i} )  = [Z_{\tau(2i)}]_1 &\mbox{ and } 
\pi (X_{2i-1}X_{2i} ) = [Z_{\tau(2i)}]_2 \label{eq:cedge}
\end{align}

Next, we consider the hopping operators that involve the $c'$-colored edges. Without loss of generality we assume that the edge $Z_1 Z_{2{\ell_f}}$ is the one which carries the dependent hopping operator and $X_{2m}X_{2m+1}$ carries the other dependent hopping operator. Then letting $2{\ell_f}+1\equiv 1$ we have
\begin{align}
\pi(Z_{2i}Z_{2i+1}) &= [X_{\tau(2i)}X_{\tau(2i+1)}]_2 \mbox{ ; } 1\leq i < {\ell_f}  \label{eq:zz-diag}\\
\pi(X_{2i}X_{2i+1}) &= [X_{\tau(2i)}X_{\tau(2i+1)}]_1 \mbox{ ; } 1\leq i\neq m \leq {\ell_f}.
\label{eq:xx-diag}
\end{align}

All these operators and their images under $\pi$ are linearly independent as can be seen from  their supports. 
From Lemma~\ref{lm:hopping} we obtain the images for the dependent hopping operators:
\begin{align}
\pi(H_{1,2{\ell_f} }^{\epsilon_{c'}})&=  [X_{\tau(1)} X_{\tau(2{\ell_f})}]_2\prod_{i=1}^{\ell_f} [Z_{\tau(2i)}]_1 \\
\pi(H_{2m,2m+1 }^{\mu_{c'}}) &=  [X_{\tau(2m)}X_{\tau(2m+1)}]_1
\prod_{i=1}^{\ell_f} [Z_{\tau(2i)}]_2
\end{align}
To complete the map it remains to find the action of $\pi$ for two more independent errors on the color code. One choice is any pair of single qubit operators $X_i$ and $Z_j$, where $1\leq i,j\leq 2{\ell_f}$. Or we can consider the images under $\pi$. We can see
from Eqs.~\eqref{eq:cedge}--\eqref{eq:xx-diag} that the images are also linearly independent and only single qubit $X$-type of errors remain to be generated. One choice is any  $[X_{\tau(i)}]_1$ on $\Gamma_1$ and $[X_{\tau(j)}]_2$ on $\Gamma_2$, where $1\leq i,j\leq 2{\ell_f}$. That is, we need to  find $E,E'$ such that $\pi(E)=[X_{\tau(i)}]_1$ and $\pi(E')=[X_{\tau(j)}]_2$ respect the commutation relations. 
Lemma~\ref{lm:split} addresses  this choice.

\begin{lemma}[Splitting]\label{lm:split} 
The following choices lead to an invertible $\pi$ while respecting the commutation relations with hopping operators in Eqs.~\eqref{eq:cedge}--\eqref{eq:xx-diag}. 
\begin{align}
\pi(g X_1  ) &=[X_{\tau(1)}]_1 \mbox{ where } g\in \{I,B_f^X, B_f^Y, B_f^Z \}\label{eq:z-split}\\
\pi(g Z_{2m} ) &=[X_{\tau(2m)}]_2 \mbox{ where } g\in \{I,B_f^X \}\label{eq:x-split}
\end{align} 
\end{lemma}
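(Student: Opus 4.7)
The plan is to prove the statement by restricting attention to the $2\ell_f$ qubits on the boundary of the $c''$-face $f$, verifying invertibility through a dimension count on the symplectic vector space $\mathcal{P}_{2\ell_f}$ modulo phase, and checking commutation with each type of hopping operator through a short case analysis. First I would observe that by Lemma~\ref{lm:dep-ops} we have exactly $4\ell_f - 2$ linearly independent hopping operators from Eqs.~\eqref{eq:cedge}--\eqref{eq:xx-diag}, so precisely two extra generators are needed to span $\mathcal{P}_{2\ell_f}$ (dimension $4\ell_f$ over $\mathbb{F}_2$). Because all the listed hopping operators have even Pauli weight (they are products of $ZZ$ or $XX$ on pairs of vertices), the single-qubit operators $X_1$ and $Z_{2m}$ are linearly independent from them and from each other, so together they complete a basis of $\mathcal{P}_{2\ell_f}$. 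The same is true of $gX_1$ and $g'Z_{2m}$ for the allowed sets of $g$, since $B_f^X$ and $B_f^Z$ themselves lie in the span of the hopping operators, so multiplying by them does not change the spanned subspace.

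On the image side I would check that Eqs.~\eqref{eq:cedge}--\eqref{eq:xx-diag} yield, on each of $\Gamma_1$ and $\Gamma_2$ restricted to $f$, all $\ell_f$ single-qubit $Z$ operators and $\ell_f - 1$ of the $\ell_f$ two-qubit $X$-edge operators along the boundary polygon of $f$. Adjoining $[X_{\tau(1)}]_1$ and $[X_{\tau(2m)}]_2$ completes $2\ell_f$ independent generators on each surface code, giving the required $4\ell_f$ in total and hence a bijective $\pi$ on this local block. For commutation preservation I would run $X_1$ and $Z_{2m}$ against each of the four hopping-operator types and compare with the commutation of the corresponding images; the check is short because hopping operators of $Z$-type and $X$-type on the color code land on different copies $\Gamma_1, \Gamma_2$, so every commutation reduces either to a trivial cross-copy check or to a single-qubit $X$--$Z$ overlap at edge $\tau(1)$ or $\tau(2m)$, which exactly mirrors the color-code check at vertex $1$ or $2m$. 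A direct expansion of $B_f^\sigma$ as a product of elementary hopping operators along $c$-coloured edges then gives $\pi(B_f^X) = [B_f^Z]_2$ and $\pi(B_f^Z) = [B_f^Z]_1$, i.e., surface-code stabilizers on the respective copies; this is precisely what makes multiplying $X_1$ or $Z_{2m}$ by any $g$ in the allowed set consistent with commutation with hopping operators, since surface-code stabilizers commute with every operator in $\mathcal{P}_{\mathsf{E}(\Gamma_i)}$.

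The main obstacle I anticipate is pinning down the asymmetry between the allowed sets $\{I, B_f^X, B_f^Y, B_f^Z\}$ for $X_1$ and $\{I, B_f^X\}$ for $Z_{2m}$. The commutation check with hopping operators alone does not distinguish the two sets, so the restriction must come from an additional consistency requirement --- most naturally, the commutation between $\pi(gX_1)$ and $\pi(g'Z_{2m})$ must match that of $gX_1$ and $g'Z_{2m}$ on the color code. Working through which $(g, g')$ combinations survive this joint cross-check (and observing that the surviving pairs form a Cartesian product whose second factor is the smaller set $\{I, B_f^X\}$) is the most delicate part, and handling it cleanly will probably require an explicit computation of $\pi(B_f^\sigma)\,[X_{\tau(1)}]_1$ and $\pi(B_f^\sigma)\,[X_{\tau(2m)}]_2$ and a careful tally of how $[B_f^Z]_1$ and $[B_f^Z]_2$ anti-commute with the chosen single-qubit $X$ generators on each copy.
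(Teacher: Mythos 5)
Your overall strategy---a dimension count on the $4\ell_f$-dimensional local Pauli algebra of the face $f$ for invertibility, a case analysis of $X_1$ and $Z_{2m}$ against the hopping operators of Eqs.~\eqref{eq:cedge}--\eqref{eq:xx-diag}, and a cross-commutation constraint between the two new preimages to account for the asymmetry of the two sets---is the same as the paper's, and your even-weight argument is a cleaner way to see the linear independence that the paper merely asserts from supports. But there are two concrete problems. First, your justification for why multiplying by $g$ is harmless rests on the claim that ``surface-code stabilizers commute with every operator in $\mathcal{P}_{\mathsf{E}(\Gamma_i)}$,'' which is false: the plaquette operator $[B^Z_{\tau(f)}]_1=\prod_{e\in\partial(\tau(f))}[Z_e]_1$ anti-commutes with $[X_e]_1$ for any single edge $e$ of that plaquette, in particular with $[X_{\tau(1)}]_1$ itself. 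If the claim were true, the choice of $g$ would be entirely unconstrained and there would be no asymmetry left to explain, so it is also inconsistent with the rest of your plan. What is actually needed, and true, is only that $B_f^X$ and $B_f^Z$ commute with each hopping operator in Eqs.~\eqref{eq:cedge}--\eqref{eq:xx-diag} (each meets $f$ in exactly two qubits of uniform Pauli type) and that their images commute with the images of those particular operators.

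Second, the step you defer is exactly where your plan would break. Imposing that $gX_1$ and $g'Z_{2m}$ commute (as their images live on different copies), the sign of their commutator is $(-1)^{a+b}$ where $a=1$ iff $g$ contains the $B_f^X$ component (which anti-commutes with $Z_{2m}$) and $b=1$ iff $g'$ contains the $B_f^Z$ component (which anti-commutes with $X_1$); the surviving pairs are therefore $\{I,B_f^Z\}\times\{I,B_f^X\}$ together with $\{B_f^X,B_f^Y\}\times\{B_f^Z,B_f^Y\}$. This is \emph{not} a Cartesian product with second factor $\{I,B_f^X\}$, so the explicit computation you propose would not produce the structure you anticipate. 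The paper instead argues sequentially: rank--nullity shows the operators commuting with the relevant $4\ell_f-3$ hopping operators form $\langle X_1,B_f^X,B_f^Z\rangle$, the required anti-commutation with $Z_1Z_2$ selects the coset $X_1\langle B_f^X,B_f^Z\rangle$ (the four choices of Eq.~\eqref{eq:z-split}), and only then is $\pi^{-1}([X_{\tau(2m)}]_2)$ determined \emph{relative to the first choice}, yielding the two choices of Eq.~\eqref{eq:x-split}; the remark following the lemma makes this order dependence explicit. To close your argument you would need to adopt this sequential reading, or else state the admissible pairs as the union above rather than a product.
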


\begin{proof}
Each face $f\in \mathsf{F}_{c''}(\Gamma)$ accounts for $2\ell_f$ qubits i.e. $4\ell_f$ independent operators.
Now $[X_{\tau(1)}]_1$ and $[X_{\tau(2m)}]_2$ form a linearly independent set of size $4{\ell_f}$ along with the images of the independent elementary hopping operators on $f$.
Thus, the elementary hopping operators and the preimages of $[X_{\tau(1)}]_1$ and $[X_{\tau(2m)}]_2$ 
account for all the $4\ell_f$ operators on qubits on $f$.
Considering all faces in $\mathsf{F}_{c''}(\Gamma)$,  we have $\sum_f 4\ell_f=2n$ operators
which generate $\mc{P}_{\mathsf{V}(\Gamma)}$.  Since their images are independent and 
$\Gamma_1 \cup \Gamma_2$ has exactly as many qubits as $\Gamma$, $\pi$ must be invertible. 

Next, we prove these choices respect the commutation relations as stated.
Consider $[X_{\tau(1)}]_1$: this error commutes with all the operators in Eq.~\eqref{eq:cedge}--\eqref{eq:xx-diag}
except $\pi(Z_1 Z_2) = [Z_{\tau(1)} ]_1$. There are $4\ell_f-3$ such hopping operators on 
$f$ with which $\pi^{-1}([X_{\tau(1)}]_1)$ must commute. As a consequence of the rank-nullity theorem there are $2^{{4\ell_f}-(4\ell_f-3)}$  such operators.
It can be verified that  $\langle X_1, B_f^X, B_f^Z\rangle $ account for these operators. But 
$\pi^{-1}([X_{\tau(1)}]_1)$ must also anti-commute with $Z_1Z_2$. This gives the choices in 
Eq.~\eqref{eq:z-split} since operators in $\langle  B_f^X, B_f^Z\rangle$ commute with $Z_1Z_2$.
Now let us determine $\pi^{-1}([X_{\tau(2m)}]_2)$. Once again with reference to 
Eq.~\eqref{eq:cedge}--\eqref{eq:xx-diag} we see that  it must commute with 
$4\ell_f-3$ hopping operators on $f$. It also commutes with $\pi^{-1}([X_{\tau(1)}]_1)$ 
since $[X_{\tau(2m)}]_2$ commutes with $[X_{\tau(1)]_1}$. Again, due to a dimensionality argument there are $2^{4\ell_f-(4\ell_f-2)}$ choices for $\pi^{-1}([X_{\tau(2m)}]_2)$. Since 
$[X_{2m}]_2$  anti-commutes with $[Z_{\tau(2m)}]_2$
its preimage must anti-commute with $\pi^{-1}([Z_{\tau(2m)}]_2)=X_{2m-1}X_{2m}$ giving  two choices $Z_{2m}$ and $Z_{2m}B_f^X$. We can check that $Z_{2m}$ satisfies all the required commutation relations as does the choice $Z_{2m} B_f^X$. 
\end{proof}

\noindent
In Lemma~\ref{lm:split} we first assigned $\pi^{-1}([X_\tau(1)]_1)$ followed by 
$\pi^{-1}([X_{\tau(2m)}]_2) $. Changing the order restricts $g$ to $\{I,B_f^X\}$ in Eq.~\eqref{eq:zz-diag} while $g\in \{I,B_f^X, B_f^Y, B_f^Z \}$ in Eq.~\eqref{eq:xx-diag}.

\begin{lemma}[Preserving commutation relations]\label{lm:commutation}
The map $\pi$ preserves commutation relations of error operators in $\mc{P}_{\mathsf{V}(\Gamma)}$. 
\end{lemma}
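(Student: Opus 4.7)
The plan is to leverage the $\F_2$-linearity of $\pi$: since commutation in the Pauli group (modulo scalars) is a bilinear symplectic form, it suffices to verify that $\pi$ preserves this form on a generating set. I take as generators, for each $c''$-face $f$ with $2\ell_f$ vertices, the $4\ell_f$ operators from Lemmas~\ref{lm:dep-ops} and~\ref{lm:split}, namely $Z_{2i-1}Z_{2i}$ and $X_{2i-1}X_{2i}$ for $1\leq i\leq\ell_f$, together with $Z_{2i}Z_{2i+1}$ for $1\leq i<\ell_f$, $X_{2i}X_{2i+1}$ for $1\leq i\neq m\leq\ell_f$, and the two single-qubit operators $X_1$, $Z_{2m}$. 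Lemma~\ref{lm:split} already established that these are $\F_2$-linearly independent with $\F_2$-linearly independent images, so together over all $c''$-faces they form a symplectic basis of $\mc{P}_{\mathsf{V}(\Gamma)}$ modulo scalars.

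The key observation that dispatches all \emph{cross-face} pairs at once is that $\mathsf{F}_{c''}(\Gamma)$ partitions $\mathsf{V}(\Gamma)$, so generators from distinct faces $f\neq f'$ have disjoint vertex supports in $\Gamma$ and commute trivially. Simultaneously, each generator from $f$ is supported only on qubits $\tau(e)$ with $e$ a $c$-edge of $\partial f$, and since every $c$-edge lies on a unique $c''$-face, the image supports from $f$ and $f'$ are also disjoint in $\Gamma_1\cup\Gamma_2$. Hence cross-face commutation is automatic on both sides, and the whole verification collapses to the intra-face case.

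For intra-face pairs on a fixed $f$, the commutations involving $X_1$ and $Z_{2m}$ were already handled inside the proof of Lemma~\ref{lm:split}; $X_1$ and $Z_{2m}$ themselves commute because their images $[X_{\tau(1)}]_1$ and $[X_{\tau(2m)}]_2$ sit on different copies, and they act on distinct qubits of $\Gamma$. What remains is a finite case analysis among elementary hopping operators. All-$Z$ and all-$X$ color-code pairs commute trivially, and by direct inspection of Eqs.~\eqref{eq:cedge}--\eqref{eq:xx-diag} their images commute as well. For mixed $Z$--$X$ pairs in which one is a $c$-edge hop and the other a $c'$-edge hop, the images fall on different copies; for two $c$-edge hops of opposite type $Z_{2i-1}Z_{2i}$ and $X_{2j-1}X_{2j}$ the images $[Z_{\tau(2i)}]_1$ and $[Z_{\tau(2j)}]_2$ again sit on different copies. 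The only substantive check is same-copy mixed pairs like $Z_{2i-1}Z_{2i}$ versus $X_{2j}X_{2j+1}$, where the identification $\tau(2i-1)=\tau(2i)$ together with $\tau(2j),\tau(2j+1)$ being the endpoints of a single edge make the surface-code overlap reproduce the color-code overlap exactly. The main obstacle is simply the bookkeeping of this case analysis; it is routine once the partition argument has reduced the problem to one face at a time.
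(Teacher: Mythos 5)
Your proof is correct and follows essentially the same route as the paper: reduce to the per-face basis consisting of the elementary hopping operators of Eqs.~\eqref{eq:cedge}--\eqref{eq:xx-diag} plus the two single-qubit operators from Lemma~\ref{lm:split}, dispatch cross-face pairs by disjointness of supports (your making this explicit on both the color-code and surface-code sides is a nice touch the paper leaves implicit), and check the remaining intra-face pairs case by case. One sentence is misstated, though your argument survives it: for a mixed pair consisting of a $c$-edge hop and a $c'$-edge hop of \emph{opposite} Pauli type (e.g.\ $Z_{2i-1}Z_{2i}$ vs.\ $X_{2j}X_{2j+1}$, or symmetrically $X_{2i-1}X_{2i}$ vs.\ $Z_{2j}Z_{2j+1}$) the images land on the \emph{same} copy, not on different copies --- this is exactly the ``substantive check'' you then carry out correctly using $\tau(2i-1)=\tau(2i)$ and the fact that $\tau(2j),\tau(2j+1)$ are the two edges of $\tau(\Gamma)$ meeting at the vertex $\tau(f')$ for the $c$-face between them.
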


\begin{lemma}[Preserving code capabilities]\label{lm:stabilizers}
Under $\pi$, stabilizers of the color code on $\Gamma$ are mapped to stabilizers on the surface codes
on $\Gamma_1$ and $ \Gamma_2$. 
\end{lemma}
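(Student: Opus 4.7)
My plan is to reduce the claim to a case analysis on the color of $f \in \mathsf{F}(\Gamma)$, writing each stabilizer $B_f^Z$ (and $B_f^X$) as a product of hopping operators whose $\pi$-images are already fixed by Lemma~\ref{lm:hopping}, and then recognizing the resulting product as a product of vertex and plaquette stabilizers of the surface codes on $\Gamma_1$ and $\Gamma_2$.

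For the routine cases $f \in \mathsf{F}_{c''}(\Gamma)$ and $f \in \mathsf{F}_{c'}(\Gamma)$, the boundary $\partial(f)$ contains $c$-colored edges, so the grouping $B_f^Z = \prod_i H^{\epsilon_c}_{2i-1, 2i}$ from Lemma~\ref{lm:dep-ops}, combined with $\pi(H^{\epsilon_c}_{u,v}) = [Z_{\tau(u)}]_1$, delivers $\pi(B_f^Z) = \prod_{e \in \partial(\tau(f))}[Z_e]_1 = [B_{\tau(f)}]_1$, the plaquette stabilizer at $\tau(f)$ on $\Gamma_1$. The parallel computation using $H^{\mu_c}$ yields $\pi(B_f^X) = [B_{\tau(f)}]_2$. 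Here I rely on the fact that after contracting the $c$-faces the only edges surviving on $\partial(\tau(f))$ are exactly the $c$-edges that were on $\partial(f)$ in $\Gamma$.

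The substantive case is $f \in \mathsf{F}_c(\Gamma)$, whose boundary carries no $c$-edges and hence forces a grouping by $c'$-edges, $B_f^Z = \prod_j H^{\epsilon_{c'}}_{u_j, v_j}$. Each such $c'$-edge sits in a unique $c''$-face $f'_j$. For those that are independent in $f'_j$, Lemma~\ref{lm:hopping} gives $\pi(H^{\epsilon_{c'}}_{u_j, v_j}) = [X_{\tau(u_j)} X_{\tau(v_j)}]_2$; for the unique dependent $c'$-edge in each $f'_j$, the explicit formula recorded just before Lemma~\ref{lm:split} shows that the image equals the same $XX$ on $\Gamma_2$ multiplied by exactly one copy of $[B_{\tau(f'_j)}]_1$. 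Because each vertex of $f$ appears in exactly one $c'$-edge of $\partial(f)$, the $XX$ factors collapse to $\prod_{v \in f}[X_{\tau(v)}]_2 = \prod_{e \in \delta(\tau(f))}[X_e]_2 = [A_{\tau(f)}]_2$, the vertex stabilizer at $\tau(f)$ on $\Gamma_2$. Thus $\pi(B_f^Z) = [A_{\tau(f)}]_2 \prod_{j \in J} [B_{\tau(f'_j)}]_1$, where $J$ indexes those dependent $c'$-edges that lie on $\partial(f)$. The argument for $B_f^X$ is strictly parallel, using $H^{\mu_{c'}}$ in place of $H^{\epsilon_{c'}}$ and swapping the roles of $\Gamma_1$ and $\Gamma_2$.

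The main obstacle I expect is the bookkeeping in this last case: correctly identifying which $c'$-edges of $\partial(f)$ are dependent in their parent $c''$-faces and checking that the extra factors from the dependent hopping operators combine into honest surface-code plaquette stabilizers rather than stray Pauli terms. This will hinge on the convention fixed just before Lemma~\ref{lm:hopping} that concentrates each $c''$-face's $c'$-edge dependency on a single edge, so that every dependent edge contributes exactly one full copy of $B_{\tau(f'_j)}$ on the opposite code and nothing more.
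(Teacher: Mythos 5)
Your proof is correct and follows essentially the same route as the paper's: a case split on the face color, decomposition of $B_f^Z$ and $B_f^X$ into elementary hopping operators along $c$-edges (for $c'$- and $c''$-faces, giving plaquette stabilizers on $\Gamma_1$ and $\Gamma_2$) or along $c'$-edges (for $c$-faces, giving vertex stabilizers up to extra plaquette-stabilizer factors from the dependent hopping operators). Your treatment of the dependent-edge bookkeeping in the $\mathsf{F}_c(\Gamma)$ case is in fact slightly more explicit than the paper's.
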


\renewcommand{\algorithmicrequire}{\textbf{Input:}}
\renewcommand{\algorithmicensure}{\textbf{Output:}}

\begin{algorithm}
\caption{{\ensuremath{\mbox{ Mapping a 2D color code to surface codes}}}}\label{alg:tcc-projections}
\begin{algorithmic}[1]
\REQUIRE {A 2-colex $\Gamma$ without parallel edges; $\Gamma$  is assumed to have a  2-cell embedding.}
\ENSURE { 
$\pi: \mc{P}_{\mathsf{V}(\Gamma)}\rightarrow \mc{P}_{\mathsf{E}(\Gamma_1)}\otimes \mc{P}_{\mathsf{E}(\Gamma_2)}$, where 
$\Gamma_i=\tau_c(\Gamma)$.}

\STATE Pick a color $c\in \{r,g,b\}$ and contract all edges of $\Gamma$ that are colored $\{r,g,b\}\setminus c$ to obtain $\tau(\Gamma)$. Denote two instances of $\tau(\Gamma)$ as $\Gamma_1$ and $\Gamma_2$.
\STATE Choose charges $\epsilon_{c}$, $\mu_{c}$, $\epsilon_{c'}$ and $\mu_{c'}$ on $\Gamma$, where $c' \neq c$.
\STATE Set up correspondence between charges on $\Gamma$ and $\Gamma_{i}$  as follows:
$\epsilon_1 \equiv \epsilon_c $,  $\mu_1 \equiv \mu_{c'}$, $\epsilon_2 \equiv \mu_c$
and $\mu_2 \equiv \epsilon_{c'}$ .
\FOR { each $c''$-colored face $f$ in $\mathsf{F}(\Gamma)$}
\STATE Let the boundary of $f$ be $v_1$, \ldots $v_{2{\ell_f}}$.
\STATE Choose a pair of $c'$-colored edges in  $\partial(f)$, say $(v_{2{\ell_f}},v_1)$ and $(v_{2m},v_{2m+1})$. Let $[ T  ]_i$  denote that $T$ acts on $\Gamma_i$. 
\begin{align}
\pi(Z_{v_1})  &= \left[ X_{\tau(v_1)}\right]_2 \prod_{i=1}^{m} \left[  Z_{\tau(v_{2i})} \right]_1  
\end{align}
\STATE For $1 \leq j\leq {\ell_f} $ compute the mapping (recursively) as 
\begin{align}
\pi(Z_{v_{2j}})  &=  \pi (Z_{v_{2j-1}}) \left[  Z_{\tau(v_{2j})}\right]_1\\
\pi(Z_{v_{2j-1}})  &= \pi (Z_{v_{2j-2}}) \left[X_{\tau(v_{2j-2})} X_{\tau(v_{2j-1})}\right]_2
\end{align}

\STATE For $1 \leq j\leq m$ compute the mapping as 
\begin{align}
\pi(X_{v_1})  &=  \left[ X_{\tau(v_1)}\right]_1\\
\pi(X_{v_{2j}})  &=  \pi (X_{v_{2j-1}}) \left[  Z_{\tau(v_{2j})}\right]_2\\
\pi(X_{v_{2j-1}})  &= \pi (X_{v_{2j-2}}) \left[ X_{\tau(v_{2j-2})}  X_{\tau(v_{2j-1})}\right]_1
\end{align}

\STATE For $m+1 \leq j\leq {\ell_f}$ compute the mapping as 
\begin{align}
\pi(X_{v_{2{\ell_f}}})  &=   \left[ X_{\tau(v_{2{\ell_f}})}\right]_1\\
\pi(X_{v_{2j-1}})  &= \pi (X_{v_{2j}}) \left[  Z_{\tau(v_{2j})}\right]_2\\
\pi(X_{v_{2j}})  &=  \pi (X_{v_{2j+1}}) \left[ X_{\tau(v_{2j})}  X_{\tau(v_{2j+1})}\right]_1
\end{align}
\ENDFOR
\end{algorithmic}
\end{algorithm}
\noindent

\begin{theorem}\label{th:tcc-map}
Any  2D color code (on a 2-colex $\Gamma$ without parallel edges) is equivalent to  a pair of surface codes $\tau(\Gamma)$ under the map $\pi$ defined as in  Algorithm~\ref{alg:tcc-projections}. 
\end{theorem}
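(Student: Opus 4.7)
The plan is to argue that Algorithm~\ref{alg:tcc-projections} is a faithful procedural realization of the map $\pi$ already assembled in Lemmas~\ref{lm:charge-pairing}--\ref{lm:stabilizers}, and then to read off equivalence from those lemmas. Equivalence here means that $\pi$ is an invertible, commutation-preserving $\F_2$-linear map $\mc{P}_{\mathsf{V}(\Gamma)} \to \mc{P}_{\mathsf{E}(\Gamma_1)}\otimes \mc{P}_{\mathsf{E}(\Gamma_2)}$ that sends the stabilizer of the color code onto that of the two surface codes; so the theorem decomposes into (i) the algorithm is well defined, (ii) it computes the same $\pi$ as the lemmas, and (iii) the combined properties establish equivalence.

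First I would verify that the algorithm is well defined. The $c''$-colored faces $\mathsf{F}_{c''}(\Gamma)$ partition $\mathsf{V}(\Gamma)$ (every vertex is incident on a unique $c$-colored and a unique $c'$-colored edge, hence on exactly one $c''$-face), so the per-face assignments of $\pi(X_{v_i})$ and $\pi(Z_{v_i})$ in steps~6--9 act on disjoint blocks of qubits and do not collide. Linear extension then defines $\pi$ on all of $\mc{P}_{\mathsf{V}(\Gamma)}$. This partitioning is also what keeps $\pi$ local, and it matches the $2n$ generators counted in Lemma~\ref{lm:split}.

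Second I would check, face by face, that the recursion in steps~7--9 reproduces the elementary hopping images of Lemma~\ref{lm:hopping} and the splitting assignment of Lemma~\ref{lm:split}. Concretely, multiplying consecutive assignments telescopes to
\begin{align*}
\pi(Z_{v_{2j-1}}Z_{v_{2j}})  &= [Z_{\tau(v_{2j})}]_1, \\
\pi(Z_{v_{2j}}Z_{v_{2j+1}})  &= [X_{\tau(v_{2j})}X_{\tau(v_{2j+1})}]_2,
\end{align*}
for the $c$-colored and non-designated $c'$-colored edges of $f$, matching Eqs.~\eqref{eq:cedge}--\eqref{eq:zz-diag}; the analogous $X$-telescope matches Eq.~\eqref{eq:xx-diag}. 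The base case $\pi(Z_{v_1})$ in step~6 is chosen precisely so that its product with the subsequent recursion yields the image $[X_{\tau(v_1)}]_1$ allowed by Lemma~\ref{lm:split} for the splitting of $X_{v_1}$ (and likewise for $Z_{v_{2m}}$), so the algorithm realizes one of the admissible choices given there. Composing around the full boundary cycle of $f$ also recovers the images of the two dependent hopping operators $H^{\epsilon_{c'}}_{v_1,v_{2\ell_f}}$ and $H^{\mu_{c'}}_{v_{2m},v_{2m+1}}$, consistent with Lemma~\ref{lm:dep-ops}.

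With these identifications in hand the theorem follows: Lemma~\ref{lm:split} supplies invertibility, since the $4\ell_f$ per-face images are linearly independent and $|\mathsf{V}(\Gamma)| = |\mathsf{E}(\Gamma_1)| + |\mathsf{E}(\Gamma_2)|$; Lemma~\ref{lm:commutation} guarantees that commutation relations in $\mc{P}_{\mathsf{V}(\Gamma)}$ are preserved; and Lemma~\ref{lm:stabilizers} sends the color-code stabilizer into the surface-code stabilizer on $\Gamma_1 \otimes \Gamma_2$, which together with invertibility makes the image exactly the stabilizer of two copies of the surface code on $\tau_c(\Gamma)$. I expect the main obstacle to be the bookkeeping in the second step: one must check that the single base case in step~6 simultaneously yields the correct images for every $c$- and $c'$-elementary hopping operator around $f$ \emph{and} the prescribed splitting. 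This is forced to work precisely because Lemma~\ref{lm:dep-ops} isolates the two ``closing'' edges where the dependencies must absorb any mismatch, so the telescoping cannot obstruct global consistency.
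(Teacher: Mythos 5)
Your proposal is correct and follows essentially the same route as the paper's own proof sketch: both reduce the theorem to Lemmas~\ref{lm:charge-pairing}, \ref{lm:hopping}, \ref{lm:split}, \ref{lm:commutation}, and \ref{lm:stabilizers}, using the fact that the $c''$-faces partition the qubits so the per-face recursion in Algorithm~\ref{alg:tcc-projections} accounts for all single-qubit operators, with invertibility, preserved commutation relations, and the stabilizer correspondence supplied by the lemmas. Your explicit telescoping check that the recursion reproduces Eqs.~\eqref{eq:cedge}--\eqref{eq:xx-diag} and the splitting choices is a useful elaboration of a step the paper leaves implicit, but it is not a different argument.
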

{
\begin{proof}[Proof Sketch]
By charge conservation we require two copies of $\tau(\Gamma)$ to represent the color code using surface codes. 
Lines 2--3 follow from Lemma~\ref{lm:charge-pairing}. Since $c''$-colored faces in $\mathsf{F}_{c''}(\Gamma)$ cover all the qubits of the color code, we account for all the single qubit operators on the color code by the {$\mathsf {for}$}-loop in lines 4--10. 
The closed form expressions for single qubit errors in lines 6--9 are a direct consequence of Lemmas~\ref{lm:hopping}, \ref{lm:split} and the choices given in Eqs.~\eqref{eq:cedge}--\eqref{eq:xx-diag} and Eqs.~\eqref{eq:z-split}--\eqref{eq:x-split}.
 By considering the images of the stabilizers of the color code, we can show that they are mapped to the stabilizers of the surface codes on $\Gamma_i$ (see Lemma~\ref{lm:stabilizers}). From Lemma~\ref{lm:commutation},
  the commutation relations among the hopping operators on the color code in Eq.~\eqref{eq:cedge}--\eqref{eq:xx-diag} and the single qubit operators in Eq.~\eqref{eq:z-split}--\eqref{eq:x-split} are preserved. 
 Hence, the errors corrected by the color code are the same as those corrected by the surface codes on  $\Gamma_i$.
Thus the color code is equivalent to two copies of $\tau(\Gamma)$.
\end{proof}
}

\noindent{\em Acknowledgment.} This research was supported by the Centre for Industrial Consultancy \& Sponsored Research. 
We thank the referees for helpful comments and references.

{
\def\cprime{$'$}

}

\section*{appendix}
In this section we provide the proofs of Lemma~\ref{lm:commutation}~and~\ref{lm:stabilizers}.

\begin{proof}[Proof of Lemma~\ref{lm:commutation}]
We only sketch the proof. It suffices to show that the commutation relations hold for a basis of $\mc{P}_{\mathsf{V}(\Gamma)}$. We consider the basis consisting of the hopping operators along 
$c$ and $c'$ edges in Eq.~\eqref{eq:cedge}--\eqref{eq:xx-diag} and the single qubit operators given in Lemma~\ref{lm:split}. 
The proof of Lemma~\ref{lm:split} shows that the commutation relations are satisfied for the single qubit operators. Consider a hopping operator along  $c'$-colored edge. This anti-commutes with exactly two hopping operators along 
$c$-colored edges on $\Gamma$.  For instance consider $Z_{2i}Z_{2i+1}$. From Eq.~\eqref{eq:cedge}
--\eqref{eq:xx-diag}
this anti-commutes with $X_{2i-1}X_{2i}$ and $X_{2i+1}X_{2i+2}$. Their images under $\pi$ are 
$[X_{\tau(2i)}X_{\tau(2i+1)}]_2$, $[Z_{\tau(2i)}]_2$ and $[Z_{\tau(2i+1)}]_2$ for which it is clear that the commutation relations are satisfied. 

The operators along the $c$-colored edges are given in  Eq.~\eqref{eq:cedge}. Suppose we consider $\pi(Z_{2i-1}Z_{2i})$;
then it anti-commutes with $X_{2i-2}X_{2i-1}$ and $X_{2i}X_{2i+1}$. We only need to verify for those operators which are independent. Assume that they are both independent, then their images are $X_{\tau(2i-2)}X_{\tau(2i-1)}$ and $X_{\tau(2i)}X_{\tau(2i+1)}$ respectively. They anti-commute with $\pi(Z_{2i-1}Z_{2i}) = [Z_{\tau(2i-1)}]_1  = [Z_{\tau(2i)}]_1$.
If only one of the operators is independent, then we need only verify for that operator. The preceding argument already establishes this result. 
We can argue in a similar fashion to show that commutation relations are preserved for the operators of the type $X_{2i-1}X_{2i}$ and $X_{2i}X_{2i+1}$. 
\end{proof}

\begin{proof} [Proof of Lemma~\ref{lm:stabilizers}]
To prove this, it suffices to show that the stabilizers associated with plaquettes of all three colors are mapped to stabilizers on the surface codes. If $\Gamma_i=\tau_c(\Gamma)$, then we show that the stabilizers associated with  $f\in \mathsf{F}_{c'}(\Gamma)\cup \mathsf{F}_{c''}(\Gamma)$ are mapped to the plaquette stabilizers on $\Gamma_i$.
If  $f\in  \mathsf{F}_{c'}(\Gamma)$, then $B_f^Z = \prod_{i=1}^{\ell_f} H_{2i-1,2i}^{\epsilon_{c}}$. By Lemma~\ref{lm:hopping} this is mapped to $\prod_{i}^{\ell_f} [Z_{\tau(2i)}]_1  = \prod_{e\in \partial(\tau(f))}[Z_e]_1$. Using a similar argument we can show that $B_f^Z \in  \mathsf{F}_{c''}(\Gamma)$ is also a plaquette stabilizer on $\Gamma_1$. Since faces in $\mathsf{F}_{c'}(\Gamma)\cup \mathsf{F}_{c''}(\Gamma)$ are in one to one correspondence with the faces of $\tau(\Gamma)$, they account for all the face stabilizers on 
$\Gamma_1$. By considering $B_f^X$, we can similarly show that they map to the face stabilizers on $\Gamma_2$. 

Now consider a face $f\in \mathsf{F}_{c}(\Gamma)$. Consider $B_f^Z$, this can be decomposed into hopping operators $H_{u,v}^{\epsilon_{c'}}$ along $c'$-edges. By Lemma~\ref{lm:hopping}, such an operator maps to $[X_{\tau(u)}X_{\tau(v)}]_2$ and an additional stabilizer on one of the faces of $\Gamma_1$ if $H_{u,v}^{\epsilon_{c'}}$ is a dependent hopping operator. Thus  $B_f^Z$ maps to a vertex operator on $\tau(f)$ in $\Gamma_2$ and possibly a combination  of plaquette stabilizers. Since every vertex in $\Gamma_2$ is from a face in $\Gamma$, we can account for  all the vertex operators on $\Gamma_2$. Similarly, by considering the stabilizer $B_f^X$ we can account for all the vertex operators on $\Gamma_1$. 
\end{proof}

We illustrate our results with an example. We consider the color code on the hexagonal lattice, see Fig.~\ref{fig:hexmap}. The color code is shown on the left and the surface codes on the right. Each of the single qubit errors and their images are shown (in bold red). 
\begin{figure}
\centering
\includegraphics[scale=0.5]{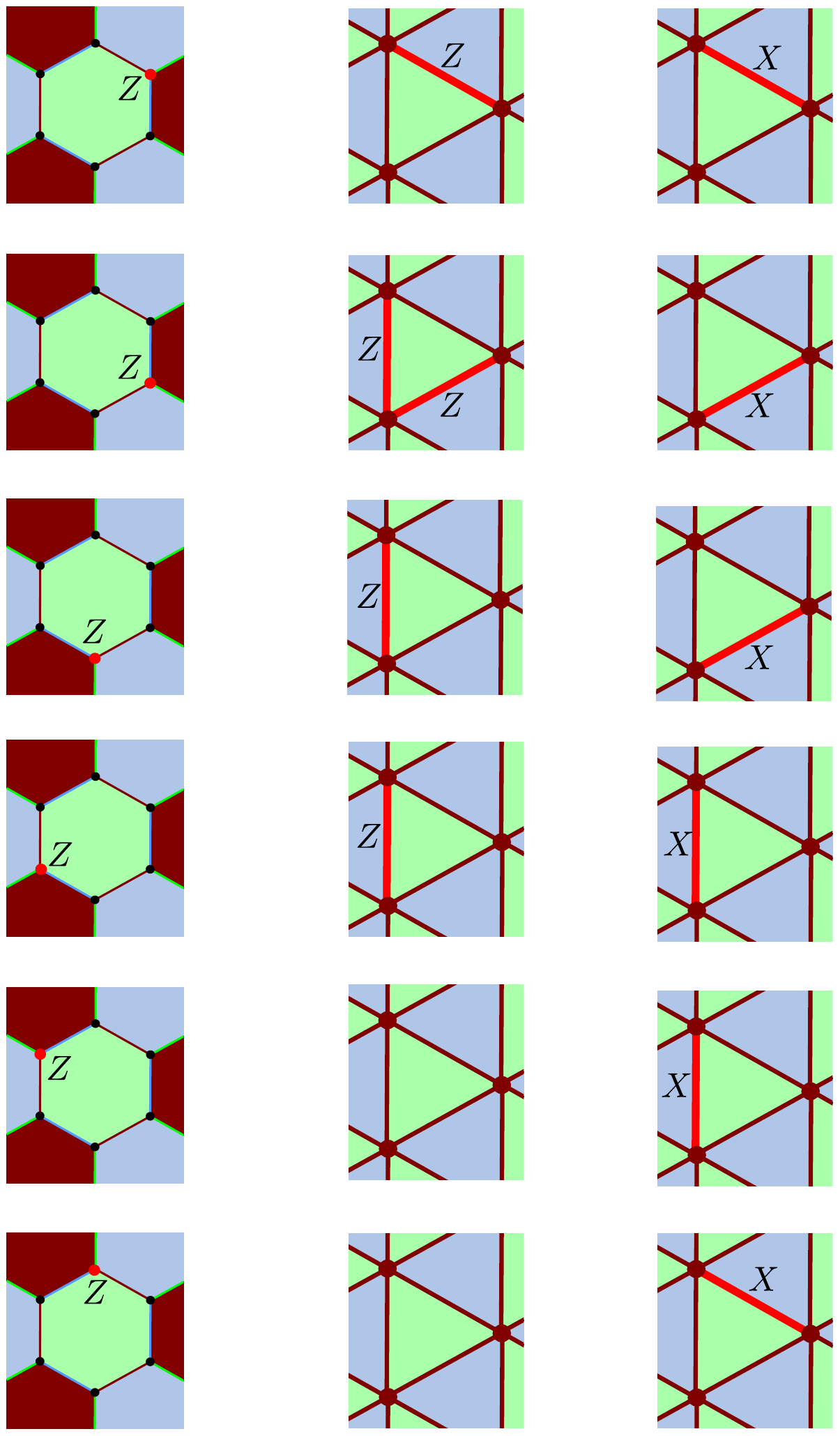}\\
\vspace{5mm}
\includegraphics[scale=0.5]{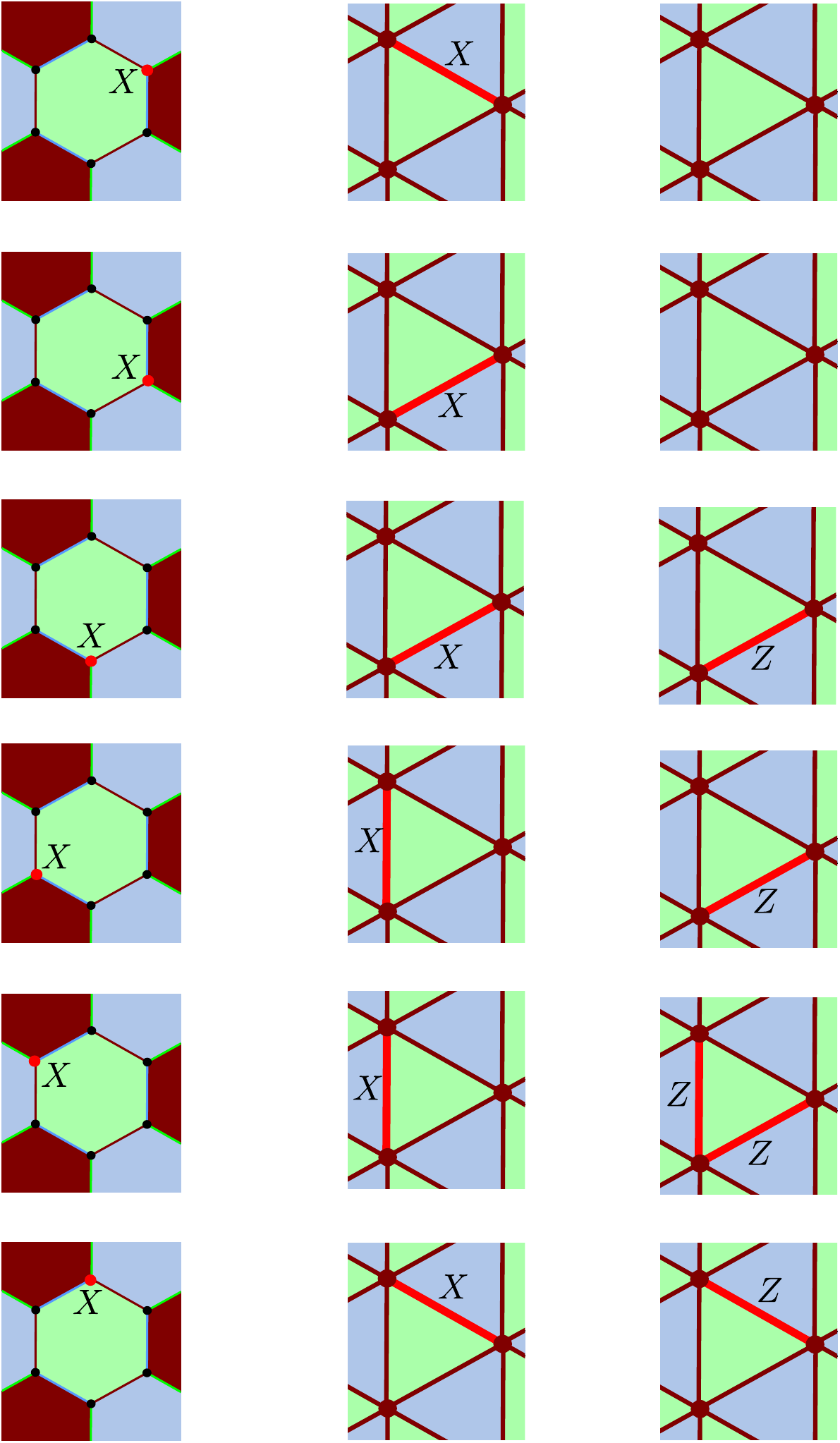}
\caption{Map for the hexagonal color code, showing the single qubit errors on color code and the associated images on the surface codes. }\label{fig:hexmap}
\end{figure}

\end{document}